\theoremstyle{plain}
\newtheorem{theorem}{Theorem}
\newtheorem*{theorem*}{Theorem}
\newtheorem{lemma}[theorem]{Lemma}
\begin{document}

\renewcommand{\labelenumi}{\roman{enumi}.}

\title{Pretty good state transfer between internal nodes of paths}

\author{  Gabriel Coutinho\thanks{ Dept.\ Computer Science,  IME - University of São Paulo, \texttt{coutinho@ime.usp.br}} \and Krystal Guo\thanks{Dept.\ of Combinatorics and Optimization, University of Waterloo, \newline \texttt{ \{kguo, cvanbomm\}@uwaterloo.ca}} \and Christopher M.~van Bommel\footnotemark[2]}

\date{  \today  }
\maketitle
\begin{abstract}
In this paper, we show that, for any odd prime $p$ and positive integer $t$, the path on  $2^t p -1$ vertices admits pretty good state transfer between vertices  $a$ and $(n+1-a)$ for each $a$ that is a multiple of $2^{t-1}$ with respect to the quantum walk model determined by the XY-Hamiltonian. This gives the first examples of pretty good state transfer occurring between internal vertices on a path, when it does not occur on the extremal vertices.

\vspace{5pt} \noindent Keywords: quantum walks, state transfer, graph eigenvalues 

\vspace{5pt} \noindent Mathematical Subject Classification: 81P68, 15A16, 05C50
\end{abstract}

\section{Introduction and preliminary definitions}

Many quantum algorithms may be modelled as a quantum process occurring on a graph. In \cite{ChildsUniversalQComputation}, Childs shows that any quantum computation can be encoded as a quantum walk in some graph and thus quantum walks can be regarded as a quantum computation primitive. 

We model a network of $n$ interacting qubits by a simple graph $G$ on $n$ vertices, where vertices correspond to qubits and edges to interactions. The interactions are defined by a time-independent Hamiltonian; that is, a symmetric matrix that acts on the Hilbert space of dimension $2^n$. In this paper, we are concerned with the $XY$-Hamiltonian, whose action on the $1$-excitation subspaces is equivalent to the action of the $01$-symmetric adjacency matrix of $G$ on $\C^n$.

Assume all qubits are initialized at the state $|0\rangle$, except for the qubit corresponding to vertex $a$, which is initialized at state $| 1 \rangle$. Let $A = A(G)$ be the adjacency matrix of $G$. According to the Schrödinger equation,
\[|\langle a |\exp(\ii t A) | b \rangle |^2\]
indicates the probability that the state $|1\rangle$ is measured at $b$ after time $t$. Whenever there is a time $t$ such that this probability is equal to $1$, we say that there is \textsl{perfect state transfer} between $a$ and $b$. Perfect state transfer has been studied in many families of graphs, including circulants \cite{BasicCirculant, BasicPetkovicPSTCirculant}, cubelike graphs \cite{GodsilBernasconiPSTCubelike, GodsilCheungPSTCubelike} and distance regular graphs \cite{CoutinhoGodsilGuoVanhove2}. See \cite{GodsilStateTransfer12} for a survey of recent results. 

For paths (or linear chains) with $n$ vertices, perfect state transfer occurs if and only if $n$ is $2$ or $3$. In general, perfect state transfer is understood to be a rare phenomenon, which motivates the definition of \textsl{pretty good state transfer} (or \textsl{almost state transfer}), which is said to occur if, for any $\epsilon > 0$, there is a time $t$ such that
\[|\langle a |\exp(\ii t A) | b \rangle | > 1 - \epsilon.\]
Pretty good state transfer was first studied for paths in \cite{GodsilKirklandSeveriniSmithPGST} and \cite{VinetZhedanovAlmost}. Proofs and methods in this area are a mix of interesting applications of number theory and algebraic graph theory. 

Pretty good state transfer between the end vertices in paths with $n$ elements occurs if and only if $n+1$ is prime, twice a prime or a power of $2$ (see \cite{GodsilKirklandSeveriniSmithPGST}). Moreover, in these cases, it occurs between any pair of vertices equally distant from the centre of the path. In contrast, the question of whether pretty good state transfer is possible between internal vertices in paths when it does not occur between the end vertices gives rise to a more interesting story and does not appear to have a simple answer. This question was raised in \cite{BanchiCoutinhoGodsilSeverini}.

In this paper, we exhibit an infinite family of paths that admit pretty good state transfer between inner vertices but not between the two end vertices. These paths have $2^t p-1$ vertices where $p$ is an odd prime and $t > 1$. We will make use of the following definitions and preliminary results.

Given a symmetric matrix $M$ with $d$ distinct eigenvalues $\theta_1 > ... > \theta_d$, we may write the spectral decomposition of $M$ as follows:
\[M = \sum_{j = 1}^d \theta_j E_j,\]
where $E_r$ denotes the idempotent projection onto the eigenspace corresponding to $\theta_r$. Note that $E_rE_s= \delta_{r,s} E_r$. 

Given a vertex $a \in V(G) = \{1,...,n\}$, let $| a \rangle$ denote the $01$-vector that is $1$ at the entry corresponding to $a$, and $0$ elsewhere. We define the \textsl{eigenvalue support} of $a$ as a subset of the distinct eigenvalues as follows:
\[\Theta_a = \{ \theta_j : E_j |a \rangle \neq 0 \}.\]
We say that vertices $a$ and $b$ are \textsl{strongly cospectral} if $E_j\ket{a} = \pm E_j \ket{b} $ for all idempotents $E_j$ in the spectral decomposition. This property is necessary for both perfect and pretty good state transfer (see \cite[Lemma 3]{BanchiCoutinhoGodsilSeverini}).

The spectrum of the adjacency matrix of the path (see \cite{BrouwerHaemers} for example) on $n$ vertices is 
\begin{align}\theta_j = 2\cos \frac{\pi j}{n+1} \quad \text{ for } j = 1, \ldots, n. \label{eq:1} \end{align}
Note that they are indexed such that $\theta_1 > \ldots > \theta_n$. For each $j$, the eigenvector corresponding to $\theta_j$ is given by
$(\beta_1, \ldots , \beta_n)$ where  $\beta_k = \sin(k\pi j / (n+1))$. The following lemma immediately follows.

\begin{lemma}\label{eq:3} Vertices $a$ and $b$ of $P_n$ are strongly cospectral if and only if $a+b = n+1$.  \end{lemma}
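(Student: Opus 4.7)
The plan is to exploit the fact that the path $P_n$ has $n$ distinct eigenvalues by (\ref{eq:1}), so every spectral idempotent $E_j$ has rank one. Writing $v_j$ for the eigenvector with entries $(v_j)_k = \sin(k\pi j/(n+1))$, we have $E_j = v_j v_j^\top/\|v_j\|^2$, so the condition $E_j\ket{a} = \pm E_j\ket{b}$ collapses to the scalar condition
\[
\sin\!\left(\tfrac{a\pi j}{n+1}\right) = \pm\,\sin\!\left(\tfrac{b\pi j}{n+1}\right)
\qquad\text{for all } j \in \{1,\dots,n\}.
\]
Once this reduction is in place, both directions become elementary trigonometric identities.

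For the sufficiency direction, I would substitute $b = n+1-a$ and use the identity
\[
\sin\!\left(\tfrac{(n+1-a)\pi j}{n+1}\right) = \sin\!\left(\pi j - \tfrac{a\pi j}{n+1}\right) = (-1)^{j+1}\sin\!\left(\tfrac{a\pi j}{n+1}\right),
\]
which exhibits the required sign $\pm = (-1)^{j+1}$ for each $j$ and hence gives strong cospectrality.

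For the necessity direction (assuming $a \neq b$, since otherwise the statement is vacuous), I would not need all values of $j$: it suffices to look at $j=1$. Both $a\pi/(n+1)$ and $b\pi/(n+1)$ lie in the open interval $(0,\pi)$, so both sines are strictly positive; this kills the minus sign. For the plus sign, $\sin$ is injective on $(0,\pi/2]$ and satisfies $\sin(x)=\sin(\pi-x)$, so $\sin(a\pi/(n+1)) = \sin(b\pi/(n+1))$ forces either $a=b$ or $a+b=n+1$.

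No step looks hard; the only thing to be careful about is the reduction in the first paragraph, where one must invoke the simplicity of the spectrum of $P_n$ to justify replacing the vector equation $E_j\ket{a}=\pm E_j\ket{b}$ by a comparison of eigenvector coordinates. After that, the trigonometric argument at $j=1$ alone yields the converse, which is perhaps the mildly surprising part of the proof — one does not need to use all $n$ idempotents.
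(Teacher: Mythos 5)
Your proof is correct and follows essentially the same route as the paper, which simply asserts that the lemma ``immediately follows'' from the explicit eigenvector entries $\beta_k = \sin(k\pi j/(n+1))$; you have just filled in the details (rank-one idempotents from the simple spectrum, the sign identity $\sin(\pi j - x) = (-1)^{j+1}\sin x$, and the $j=1$ argument for the converse). The observation that $j=1$ alone suffices for necessity, and the explicit handling of the $a=b$ case, are sound refinements rather than a different approach.
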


We will make use of the following result, which is derived from \cite[Theorem 2]{BanchiCoutinhoGodsilSeverini}, and completely characterizes pretty good state transfer. The second condition is an immediate consequence of Kronecker's theorem (see \cite[Theorem 4]{BanchiCoutinhoGodsilSeverini}).

\begin{theorem}\cite{BanchiCoutinhoGodsilSeverini} \label{thm:1}
Let $a$ and $b$ be vertices in a path with $n$ vertices. Then pretty good state transfer happens between $a$ and $b$ if and only if both conditions below hold.
\begin{enumerate}[(i)]
    \item $a + b = n+1$. In this case, for all $\theta_j \in \Theta_a$, define $\sigma_j = 0$ if $E_j \ket{a} = E_j \ket{b}$, and $\sigma_j = 1$ if $E_j \ket{a} = - E_j \ket{b}$.
    \item For any set of integers $\{\ell_j : \theta_j \in \Theta_a\}$ such that
    \[\sum_{\theta_j \in \Theta_a} \ell_j \theta_j = 0 \quad \text{and}\quad \sum_{\theta_j \in \Theta_a}\ell_j = 0,\]
    then
    \[\sum_{\theta_j \in \Theta_a} \ell_j \sigma_j \text{ is even}.\]    
\end{enumerate}
\end{theorem}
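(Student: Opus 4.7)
The plan is to reduce the problem to a simultaneous Diophantine approximation condition on the phases $t\theta_j \bmod 2\pi$, and then invoke Kronecker's theorem. Using the spectral decomposition,
\[
\langle a | \exp(\ii t A) | b \rangle = \sum_{\theta_j \in \Theta_a} e^{\ii t \theta_j}\, \langle a | E_j | b \rangle.
\]
Since $|\langle a | E_j | b\rangle| \le \|E_j\ket{a}\|\,\|E_j\ket{b}\|$ and $\sum_j \|E_j\ket{a}\|^2 = \sum_j \|E_j\ket{b}\|^2 = 1$, the triangle inequality together with Cauchy--Schwarz gives $|\langle a | \exp(\ii t A) | b\rangle| \le 1$. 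For the modulus to approach $1$, every inequality in this chain must become tight in the limit, which forces (a) $\|E_j\ket{a}\| = \|E_j\ket{b}\|$ and $E_j\ket{a}$ parallel to $E_j\ket{b}$ for each $j \in \Theta_a$, i.e.\ $E_j\ket{a} = \pm E_j\ket{b}$, and (b) a simultaneous phase alignment of the summands. Part (a) is strong cospectrality, which by Lemma~\ref{eq:3} is equivalent to $a+b=n+1$. This yields condition (i) and defines the sign sequence $\sigma_j \in \{0,1\}$.

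Assuming (i), the amplitude reduces to
\[
\langle a | \exp(\ii t A) | b \rangle = \sum_{\theta_j \in \Theta_a} (-1)^{\sigma_j} p_j\, e^{\ii t \theta_j}, \qquad p_j := \|E_j\ket{a}\|^2 > 0,\ \sum_j p_j = 1,
\]
and its modulus tends to $1$ if and only if there exist arbitrarily large $t$ and a common phase $\phi$ such that the unit complex numbers $(-1)^{\sigma_j} e^{\ii t\theta_j}$ are all simultaneously close to $e^{\ii\phi}$; equivalently,
\[
t\theta_j \equiv \phi + \sigma_j \pi \pmod{2\pi} \qquad \text{for every } \theta_j \in \Theta_a.
\]
By the Kronecker--Weyl density theorem, the orbit closure of $t \mapsto (t\theta_j \bmod 2\pi)_j$ in $(\R/2\pi\Z)^{|\Theta_a|}$ is exactly the annihilator of the lattice $L = \{(\ell_j) \in \Z^{|\Theta_a|} : \sum_j \ell_j \theta_j = 0\}$. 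Hence the target $(\sigma_j \pi + \phi)_j$ is approximable for some $\phi$ iff, for every $(\ell_j) \in L$,
\[
\pi \sum_j \ell_j \sigma_j + \phi \sum_j \ell_j \in 2\pi\Z.
\]
The free parameter $\phi$ absorbs every constraint coming from a relation with $\sum_j \ell_j \neq 0$, and the substantive remaining constraint---namely the one coming from relations with $\sum_j \ell_j = 0$---reduces precisely to $\sum_j \ell_j \sigma_j$ being even. This is condition (ii).

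The step requiring the most care is checking that a single $\phi$ can simultaneously satisfy all constraints coming from relations with $\sum_j \ell_j \neq 0$. The idea is that if $(\ell_j),(\ell'_j) \in L$ have nonzero column sums $s,s'$, then the combination $s'(\ell_j) - s(\ell'_j)$ still lies in $L$ but has column sum zero; applying condition (ii) to this combined relation yields exactly the compatibility needed between the two $\phi$-equations, so (ii) is both necessary and sufficient. The underlying characterization is already \cite[Theorem~2]{BanchiCoutinhoGodsilSeverini}, to which I would defer for the complete verification of this reduction.
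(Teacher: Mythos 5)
Your argument is correct and follows essentially the route the paper itself indicates: the paper gives no proof of Theorem~\ref{thm:1}, citing \cite{BanchiCoutinhoGodsilSeverini} and noting that condition (i) comes from the necessity of strong cospectrality (Lemma~\ref{eq:3}) and condition (ii) from Kronecker's theorem, which is exactly your reduction to phase alignment and the annihilator of the relation lattice. Your treatment of the compatibility of the $\phi$-constraints across relations with nonzero column sum is the one genuinely delicate step, and you handle it correctly.
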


\section{New examples of pretty good state transfer}

The path graph $P_n$ is the graph on vertices $\{1,\ldots, n\}$ where vertex $a$ is adjacent to $a+1$ for all $a = 1,\ldots, n-1$.

\begin{theorem}\label{thm:main} Given any odd prime $p$ and positive integer $t$, there is pretty good state transfer in $P_{2^t p -1}$ between vertices  $a$ and $2^t p -a$, whenever $a$ is a multiple of $2^{t-1}$.
\end{theorem}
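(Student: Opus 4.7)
Since $a + (n + 1 - a) = n + 1$, Lemma \ref{eq:3} gives condition (i). For $\sigma_j$, the eigenvector formula $\beta_k = \sin(k\pi j/(n+1))$ combined with the identity $\sin(\pi j - x) = (-1)^{j+1}\sin x$ yields $E_j\ket{a} = (-1)^{j+1} E_j\ket{b}$, so $\sigma_j = 1$ if and only if $j$ is even. Moreover $\theta_j \in \Theta_a$ if and only if $(n+1) \nmid aj$; writing $a = 2^{t-1}m$ and $n+1 = 2^t p$ gives $\Theta_a = \{\theta_j : 2p \nmid mj\}$.

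The remaining task is condition (ii): whenever $(\ell_j)$ are integers with $\sum_j \ell_j \theta_j = 0$ and $\sum_j \ell_j = 0$, we need $\sum_{j\text{ even}} \ell_j$ even. Let $N = 2(n+1) = 2^{t+1}p$, so each $\theta_j$ lies in the real cyclotomic field $K := \mathbb{Q}(\zeta_N)^+$. The Galois group $\mathrm{Gal}(K/\mathbb{Q}) \cong (\mathbb{Z}/N)^*/\{\pm 1\}$ partitions $\{\theta_j\}_{j=1}^n$ into orbits $O_d$ indexed by $d = \gcd(j,N)$. A crucial structural observation is that the parity of $j \in O_d$ depends only on $d$: writing $j = dk$ with $\gcd(k, N/d) = 1$, if $d$ is odd then $d \in \{1,p\}$ and $N/d$ is divisible by $2^{t+1}$, forcing $k$ odd and hence $j = dk$ odd; if $d$ is even then $j$ is automatically even. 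Setting $S_d := \sum_{j \in O_d \cap \Theta_a} \ell_j$, we obtain $\sum_{j \text{ even}} \ell_j = \sum_{d \text{ even}} S_d$, which together with $\sum_d S_d = 0$ reduces condition (ii) to proving that $S_1 + S_p$ is even for every such $(\ell_j)$.

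To establish $S_1 + S_p \equiv 0 \pmod 2$, the plan uses subfield analysis: each orbit $O_d$ lives exactly in the subfield $\mathbb{Q}(\zeta_{N/d})^+$, which is proper for $d > 1$ and equals $K$ for $d = 1$. Projecting the relation $\sum_j \ell_j \theta_j = 0$ onto a direction in $K$ not contained in any proper cyclotomic subfield isolates a linear constraint involving only the coefficients $\ell_j$ for $j \in O_1$, which combined with the antipodal identity $\theta_j + \theta_{n+1-j} = 0$ (pairing indices of equal parity, since $n+1$ is even) forces $S_1$ to be even. Applying the Galois trace $\mathrm{Tr}_{K/\mathbb{Q}(\zeta_{2^{t+1}})^+}$ to the same relation, and using that $\theta_{pk} = 2\cos(\pi k/2^t) \in \mathbb{Q}(\zeta_{2^{t+1}})^+$ for $j = pk \in O_p$ while $[K : \mathbb{Q}(\zeta_{2^{t+1}})^+] = p - 1$ is even, gives a complementary mod-$2$ constraint which, together with $\sum_d S_d = 0$, forces $S_p$ to be even as well. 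The main obstacle is handling the vertex-dependent exclusion $\{j : 2p \mid mj\}$: the exclusions remove whole even-$d$ orbits whose identity varies with $\gcd(m, 2p)$, so one must verify that the above argument survives regardless of which even-$d$ orbits are removed. This case analysis, controlled by the Möbius-weighted orbit sums $\sum_{j \in O_d} \theta_j = \mu(N/d)$, is where the assumption that $a$ is a multiple of $2^{t-1}$ (but not necessarily of $2^t$) enters essentially.
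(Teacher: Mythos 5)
Your setup is sound: condition (i), the formula $\sigma_j \equiv j+1 \pmod 2$, the description of $\Theta_a$, and the reduction of condition (ii) to showing that $S_1 + S_p$ is even (using that the parity of $j$ equals the parity of $\gcd(j,N)$ and that the odd divisors of $N=2^{t+1}p$ are $1$ and $p$) are all correct. But the heart of the argument is missing. The two decisive claims --- that ``projecting onto a direction not contained in any proper cyclotomic subfield'' forces $S_1$ to be even, and that the trace to $\mathbb{Q}(\zeta_{2^{t+1}})^+$ forces $S_p$ to be even --- are asserted, not proved. The first is not even a well-defined operation until you exhibit the linear functional and compute its values on $\{\theta_j : j \in O_1\}$ modulo $2$; nothing guarantees those values are all odd, which is what the conclusion $S_1 \equiv 0 \pmod 2$ would require. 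The second is more delicate still: the relation $\sum_j \ell_j\theta_j = 0$ by itself does \emph{not} force $S_p$ even (already for $n=11$ one can make $S_p$ odd if the condition $\sum_j \ell_j = 0$ is dropped), so any correct argument must combine the trace identity with the sum-zero condition and with parity information about every even-$d$ orbit sum $S_d$ --- none of which you compute. Finally, you explicitly defer the case analysis over which orbits are excluded from $\Theta_a$ (``one must verify that the above argument survives''), and that is precisely where the hypothesis that $a$ is a multiple of $2^{t-1}$ must enter; leaving it unexamined means the role of the hypothesis is never actually established.

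For comparison, the paper avoids all of this Galois bookkeeping by working with polynomials: it encodes the relation as $P(x)=\sum_{j=1}^{n}\ell_j x^j+\sum_{j=n+2}^{2n+1}\ell_{2n+2-j}x^j$, notes $P(\zeta_{2m})=0$ so $\Phi_{2m}(x)\mid P(x)$, writes down the quotient $Q(x)$ explicitly, and reads off the coefficients of $x^k$ for $k>2^t(p+1)-1$ to obtain the identities $\ell_j=\ell_{2^tp-j}$ for all even $j$; the desired parity of $\sum_{j\text{ even}}\ell_j$ then falls out immediately by pairing. If you want to pursue your route, the concrete task is to prove, for each divisor $d$ of $N$ with $O_d\subseteq\Theta_a$, a congruence for $S_d$ modulo $2$ that follows from the relation, and then assemble these with $\sum_d S_d=0$; as it stands the proposal is a plan rather than a proof.
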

\begin{proof}
For simplicity, let $n = 2^t p -1$. For vertices $a$ and $(n+1-a)$, condition (i) of Theorem \ref{thm:1} is satisfied with $2 \sigma_j = 1 + (-1)^j$, by Lemma \ref{eq:3}.

The eigenvalues of the path $P_n$ belong to the cyclotomic field $\Q[\zeta_{2m}]$, where $m = n + 1$. More precisely,  
\[2 \cos\left( \frac{j \pi }{m} \right) = \zeta_{2m}^j + \zeta_{2m}^{-j}.\]
If $m = 2^k p$, then the cyclotomic polynomial is
\[\Phi_{2m}(x) = \sum_{i = 0}^{p-1} (-1)^i x^{2^k i}.\]

We will proceed by showing that part (ii) of Theorem \ref{thm:1} holds. If $a$ is a multiple of $2^{t-1}$, suppose there is a linear combination of the eigenvalues in $\Theta_{a}$, satisfying
\[\sum_{j=1}^{n} \ell_j \theta_j = 0,\]
where we make $\ell_j = 0$ if $\theta_j \notin \Theta_a$. Recall that the $a$th entry of the $\theta_j$-eigenvector is $\sin(a\pi j / (n+1))$. We see that $\theta_j$ belongs to $\Theta_a$ if and only if $2p$ does not divide $j$. 

We define the polynomial $P(x)$ as follows:
\[P(x) = \sum_{j=1}^{n} \ell_j x^j + \sum_{j=n+2}^{2n+1} \ell_{2n+2 - j} x^{j}\]
We see that $\zeta_{2m}$ is a root of $P(x)$ and, 
since $\Phi_{2m}(x)$ is the minimal polynomial of $\zeta_{2m}$, we see that $\Phi_{2m}(x)$ divides $P(x)$. 

Let $Q(x)$ be the following polynomial:
\begin{align*}
Q(x) &= \sum_{j = 1}^{2^t} \ell_j x^j + \sum_{j = 2^t + 1}^{2^t p-1} (\ell_j + \ell_{j - 2^t}) x^j - \ell_{2^t (p-1)} x^{2^t p} \\ &+ \sum_{j = 1}^{2^t -  1} (\ell_{2^t p - j} + \ell_{2^t (p - 1) + j} - \ell_j) x^{2^t p + j}.
\end{align*}

Consider $[x^k] \Phi_{2m}(x) Q(x)$. It is easy to see that $[x^k] \Phi_{2m}(x) Q(x) = [x^k] P(x)$ for $k = 0,\ldots, 2^t (p+1)-1$. Since the degree of $Q(x)$ is $2^t (p+1)-1$, we may conclude that $Q(x)$ is the unique polynomial of degree $2^t (p+1)-1$ such that 
\begin{equation}\label{eq:eq1}
    [x^k] \Phi_{2m}(x) Q(x) = [x^k] P(x)
\end{equation}
for $k = 0,\ldots, 2^t (p+1)-1$. In particular, the quotient $P(x)/ \Phi_{2m}(x)$ is a polynomial of degree $2^t (p+1)-1$ such that (\ref{eq:eq1}) holds, therefore
\[P(x) = \Phi_{2m}(x) Q(x).\]

From the coefficients of $x^k$ for $k > 2^t (p+1)-1$, it follows that, for $j = 2, 4, \ldots, 2^{t-1} - 2,$ and $i= 1,\ldots,(p - 1)/2$,
\begin{align*}
\ell_j - \ell_{2^t p - j} & = (-1)^i (\ell_{i 2^{t} \pm j} - \ell_{(p - i) 2^{t} \mp j}), \text{ and}\\
\ell_{i 2^{t-1}} - \ell_{(p-i)2^{t-1}} & = 0.
\end{align*}
Recall that $\ell_{2kp} = 0$ for any integer $k$.

Given $j \in \{2, 4, \ldots, 2^{t-1} - 2\}$, note that $j \neq 0 \pmod p$, and since $2^{t} \neq 0 \pmod p$, there is $i \in \Z_p$ such that $i 2^{t}\equiv j \pmod p$. If $1 \leq i \leq (p-1)/2$, then $\ell_{i 2^{t} - j} = \ell_{(p - i) 2^{t} + j} = 0$, and if $(p-1)/2 + 1 \leq i \leq p-1$, then $\ell_{i 2^{t} + j} = \ell_{(p - i) 2^{t} - j} = 0$. In either case, it follows that $\ell_j - \ell_{2^t p - j} = 0$.  Therefore $\ell_j = \ell_{2^t p - j}$ for all even $j$.

Thus, we see that 
\[\sum_{\theta_j \in \Theta_{a}} \ell_j \sigma_j = \sum_{j \text{ even}} \ell_r \equiv \ell_{2p} \equiv 0 \pmod 2,\]
which concludes the proof.
\end{proof}

\section{Open Problems}

Pretty good state transfer between end vertices of paths was classified in \cite{GodsilKirklandSeveriniSmithPGST}. In this paper, we have given an infinite family of paths where pretty good state transfer occurs between internal vertices, but not between the end vertices. A full classification of the orders $n$ where this occurs would be interesting and would complete the classification of pretty good state transfer in paths. 

In \cite{BanchiCoutinhoGodsilSeverini}, the authors study pretty good state transfer between the end vertices in a Heisenberg chain; they study the Hamiltonian $\exp(itL(P_n))$, where $L$ denotes the Laplacian adjacency matrix. In this case, they show that pretty good state transfer occurs between the end vertices of $P_n$ if and only if $n$ is a prime congruent to $1$ modulo $4$ or $n$ is a power of $2$. It is an open problem to find examples in which pretty good state transfer occurs between internal vertices while it does not occur between the end vertices. 

\section{Acknowledgements}
GC acknowledges the support of grants FAPESP 15/16339-2 and FAPESP 13/03447-6. CV acknowledges the support of an NSERC Canada Graduate Scholarship.  The three authors thank Chris Godsil for generously inviting Gabriel Coutinho to visit Waterloo, as well as for his general advice, support and for promoting a fruitful and positive research environment.

\end{document}